\newcommand{\qqpi}[2]{[\![#2]\!]_{#1}}
\newcommand{\projectname}{\emph{OptTyper}\xspace}
\newcommand{\restate}[1]{\textsc{Restatement of #1}. \hspace*{1pt} \it}
\definecolor{Maroon}{cmyk}{0.4,0.87,0.68,0.32} 
\definecolor{RoyalBlue}{rgb}{0.0,0.14,0.6}
\definecolor{mygreen}{rgb}{0.45,0.62,0.51}
\definecolor{UscGold}{rgb}{1.0,0.8,0.0}
\definecolor{mygray}{rgb}{0.35,0.35,0.35}
\definecolor{mypurple}{rgb}{0.69,0.50,0.63}
\definecolor{myrose}{rgb}{0.58,0.50,0.63}
\tiny\color{mygray}, 
\lstdefinelanguage{JavaScript}{
keywords={const, typeof, new, true, false, catch, function, 
  return, null, catch, switch, var, if, in, while, do, else, 
  case, break, class, export,throw, implements, import, this,
  exports, interface, readonly, from, public},
keywordstyle=\color{myrose},
ndkeywords={boolean, string, number, any, Array, Window, Event, Array<any>, Object},
ndkeywordstyle=\color{Maroon},
identifierstyle=\color{black},
sensitive=false,
escapechar=!,
comment=[l]{//},
morecomment=[s]{/*}{*/},
commentstyle=\color{mygray}\ttfamily,
morestring=[b]',
morestring=[b]"
}
\newlength{\listingindent}  
\lstdefinelanguage{Fake}{
    keywords={function, return},
    keywordstyle=\bfseries,
    sensitive=false,
    escapechar=!,
    numbers=none
}
\begin{document}

\title{Probabilistic Type Inference by Optimising Logical and Natural Constraints\thanks{This work was supported by Microsoft Research through
its PhD Scholarship Programme.}}

\author{Irene Vlassi Pandi\inst{1} \and
Earl T. Barr\inst{2} \and
Andrew D. Gordon\inst{1,3}\and
Charles Sutton\inst{1,4,5}}
\authorrunning{I. Vlassi Pandi et al.}
%
\institute{University of Edinburgh, Edinburgh, UK \\ \email{irene.vp@ed.ac.uk} \and 
University College London, London, UK \\ \email{e.barr@ucl.ac.uk}\and Microsoft Research Cambridge, Cambridge, UK \\ \email{adg@microsoft.ac.uk}
\and
The Alan Turing Institute, London, UK \and Google AI, Mountain View, USA \\\email{charlessutton@google.com}}

\maketitle
\begin{abstract}\label{sec:abstract}
    We present a new approach to the type inference problem for dynamic languages. Our goal is to combine \emph{logical} constraints, that is, deterministic information from a type system, with \emph{natural} constraints, that is, uncertain statistical information about types learnt from sources like identifier names. To this end, we introduce a framework for probabilistic type inference that combines logic and learning: logical constraints on the types are extracted from the program, and deep learning is applied to predict types from surface-level code properties that are statistically associated. The foremost insight of our method is to constrain the predictions from the learning procedure to respect the logical constraints, which we achieve by relaxing the logical inference problem of type prediction into a continuous optimisation problem.
	%
    We build a tool called \projectname to predict missing types for TypeScript files.
	\projectname 
	combines a continuous interpretation of logical constraints derived by classical static analysis of TypeScript code, with natural constraints obtained from a deep learning model, which learns naming conventions for types from a large codebase. 
	By evaluating \projectname, we show that the combination of logical and natural constraints yields a large improvement in performance over either kind of information individually and achieves a 4\% improvement over the state-of-the-art.
	%
	\keywords{Type Inference, Optional Typing, Continuous
	Relaxation, Numerical Optimisation, Deep Learning, TypeScript}
\end{abstract}

\section{Introduction}

Statically-typed programming languages aim to enforce correctness and safety properties
on programs by guaranteeing constraints on program behaviour.
A large scale user-study
suggests that programmers
benefit from type safety~\cite{hanenberg14}; use of types has also been
shown to prevent field bugs~\cite{gao17}.
However, type safety comes at a cost: these languages often require explicit type annotations,
which imposes the burden of declaring and maintaining these annotations on the programmer.
Strongly statically-typed, usually functional languages, like Haskell or ML,
offer type inference procedures that reduce
the cost of explicitly writing types but come with
a steep learning curve~\cite{tirronen15}.

Dynamically typed languages, which either lack or do not require type
annotations, are relatively more popular~\cite{meyerovich12}.  Initially
designed for quick scripting or rapid prototyping, these languages
have begun reaching the limits of what can be achieved without the help of type
annotations, as witnessed by the heavy industrial investment in and proliferation of static type systems for these languages (TypeScript~\cite{typescript} and Flow~\cite{flow} are just two
examples).
Retrofit for dynamic languages, these type systems include gradual~\cite{siek06}
and optional type systems~\cite{bracha2004pluggable}.  Like classical type systems, these type systems
require annotations to provide benefits.  Hence, reducing the annotation type tax
for dynamic languages remains an open research topic.

\subsection{Probabilistic Type Inference}

Probabilistic type inference
has recently been proposed as an attempt to reduce the burden
of writing and maintaining type
annotations~\cite{wei20,raychev15,hellendoorn18}.
Just as the availability of large data sets has transformed artificial intelligence,
the increased volume of publicly available source code, through
code repositories like GitHub~\cite{github}
or GitLab~\cite{gitlab},
enables a new class of applications that leverage statistical
patterns in large codebases~\cite{allamanis17}.
Specifically, for type inference, machine learning
allows us to develop less strict type inference systems
that learn to predict types from uncertain information,
such as comments, names, and lexical context,
even when traditional type inference procedures
fail to infer a useful type.

The classic literature on conventional type systems takes great care to demonstrate
that type inference only suggests sound types~\cite{DBLP:journals/jcss/Milner78,Pierce2002}.
Probabilistic type inference is not in conflict with classical type inference but complements it.
There are settings, like TypeScript, where correct type inference is too imprecise.
In these settings, probabilistic type inference helps the human in the loop to move a
partially typed codebase---one lacking so many type annotations that classical type inference
can make little progress---to a sufficiently annotated state that classical type inference can take over and finish the job.

Two examples of probabilistic type inference tools are JSNice~\cite{raychev15},
which uses probabilistic graphical models to statistically infer types of identifiers
in programs written in JavaScript,
and DeepTyper~\cite{hellendoorn18}, which targets TypeScript via deep learning techniques.
However, none explicitly models the underlying
type inference rules, and, thus, their predictions ignore useful type constraints.
Most recently, Wei et al.~\cite{wei20} introduced LambdaNet to exploit both type constraints and name usage information by using graph neural networks~\cite{allamanis17}. LambdaNet, however, does not constrain 
\emph{the output} of the network to satisfy these constraints;
this must be learnt automatically from data,
and there is no guarantee that the resulting model will respect the type constraints at test time. 
Indeed, in practice, we observe that LambdaNet
produces annotations that do not respect the learnt 
logical relationships (see \cref{fig:LambdaNet}).

Recognising this problem TypeWriter~\cite{pradel19}  combines a deep learning model and a type checker for probabilistic type prediction for Python.
TypeWriter proposes enumerating the top-ranked predictions from a neural type predictor, and subsequently filtering those that do not type-check.
A key difference is that \projectname incorporates both sorts of constraint into a single prediction step, whereas TypeWriter validates the probabilistically predicted types in a second, potentially expensive, step using a type checker.


\subsection{Our Contribution}

In our view, probabilistic type inference should be considered as a constrained problem, as it makes no sense to suggest types that violate type constraints. To respect this principle, we propose \projectname (from ``optimising for optional types''), a principled framework for probabilistic type inference that couples hard, \textit{logical} type constraints with soft constraints drawn from structural, \textit{natural} patterns into a single optimisation problem. While, in theory, there is the option of filtering out the incorrect predictions, our framework goes beyond that; our composite optimisation serves as a communication channel between the two different sources of information. 

Current type inference systems rely
on one of two sources of information
\begin{enumerate}[label=(\Roman*)]
	\item \emph{Logical} Constraints on type annotations that follow from the type system.
	      These are the constraints used by standard deterministic approaches for static type inference.
	\item \emph{Natural} Constraints are statistical constraints on type annotations
	      that can be inferred from relationships between types and surface-level properties such as names and lexical context.
	      These constraints can be learned by applying machine learning to large codebases.
\end{enumerate}
Our goal is to improve the accuracy of probabilistic type
inference by combining both kinds of constraints into a single analysis, unifying logic and learning into a single framework.
We start with a formula that defines the logical constraints on the types of a set of identifiers in the program,
and a machine learning model, such as a deep neural network, that probabilistically predicts the type of each identifier.

The key idea behind our methods is a
\emph{continuous relaxation} of
the logical constraints~\cite{hajek98}.
This means that we relax the logical formula into a continuous function by relaxing type environments
to probability matrices and defining
a continuous semantic interpretation of logical expressions.
The relaxation has a special property, namely,
that when this continuous function is maximised with respect
to the relaxed type environment, we obtain a discrete type environment
that satisfies the original constraints.
The benefit of this relaxation is that logical constraints
can now be combined with the probabilistic
predictions of type assignments that are
produced by machine learning methods.

More specifically, this allows us to define a continuous function over the continuous version of the type environment
that sums the logical and natural constraints.
And once we have a continuous function, we can optimise it:
we set up an optimisation problem that returns the most natural type assignment for a
program, while at the same time respecting type constraints produced by traditional type inference.
\begin{figure}[!t]
    \hspace{-1cm}%
    \def\svgwidth{1.10\linewidth}
    \input{./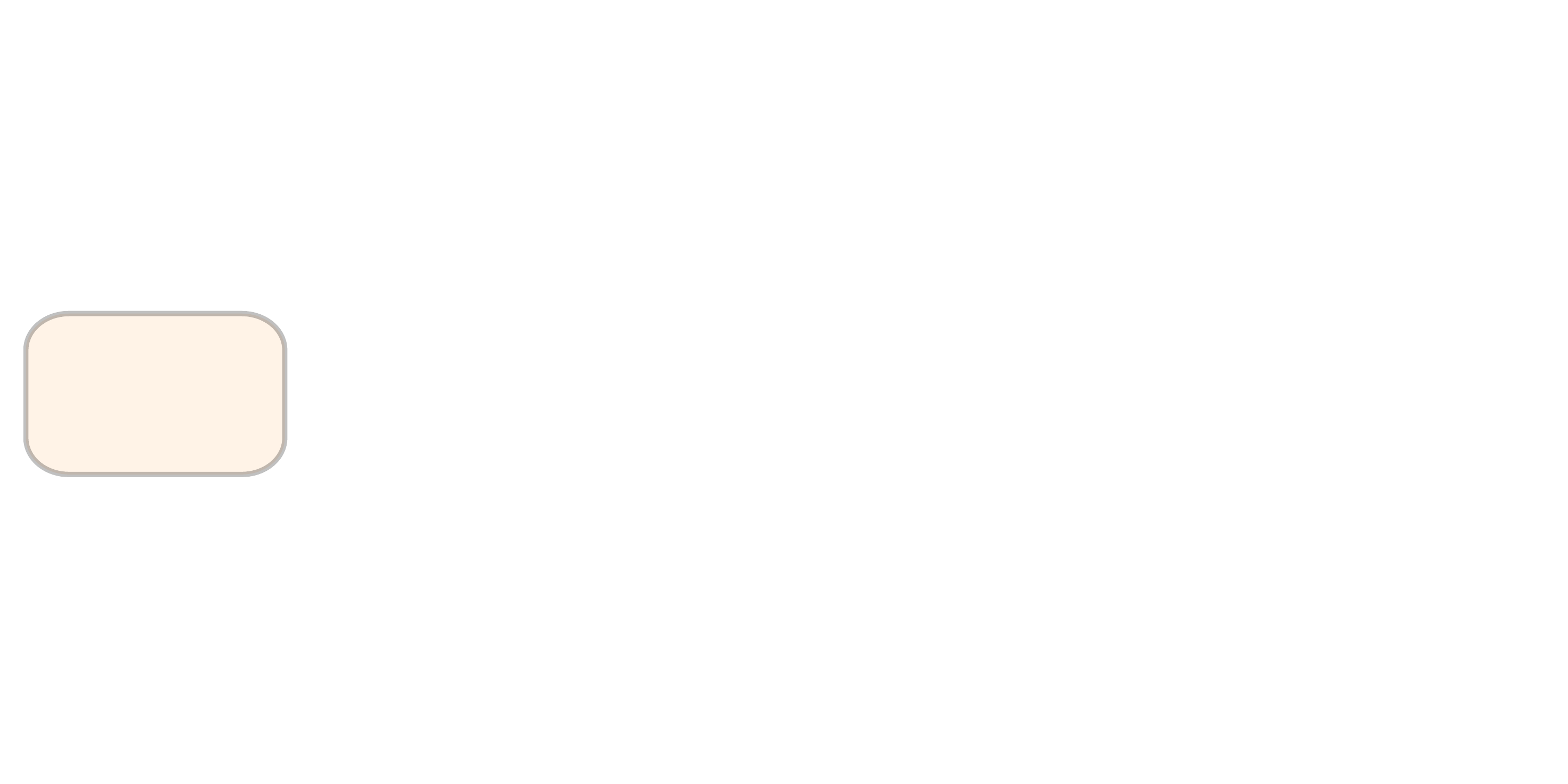_tex}
    \caption{Overview of general framework that combines logical
    and natural constraints in a single optimisation problem.}\label{fig:overview}
\end{figure}
Our main contributions follow:
\begin{itemize}[label=\raisebox{0.25ex}{\tiny$\bullet$}]
	\item We introduce a general, principled framework that uses soft logic to combine logical and natural constraints for type inference,
	      based on transforming a type inference procedure into a single numerical optimisation problem.
	\item We instantiate this framework in \projectname, a probabilistic type inference tool for TypeScript.
	\item We evaluate \projectname and find that combining logical and natural constraints has better performance than either alone. Further, \projectname outperforms state-of-the-art systems,
	      LambdaNet, DeepTyper and JSNice.
	\item We show how \projectname achieves its high performance by combining \textit{Logical} and \textit{Natural} constraints as an optimisation problem at test time.
\end{itemize}


\section{General Framework for Probabilistic Type Inference}\label{sec:framework}

\begin{figure}[!t]
    \hspace{-0.8cm}%
	\def\svgwidth{1.4\linewidth}
	 \scalebox{.80}{\input{./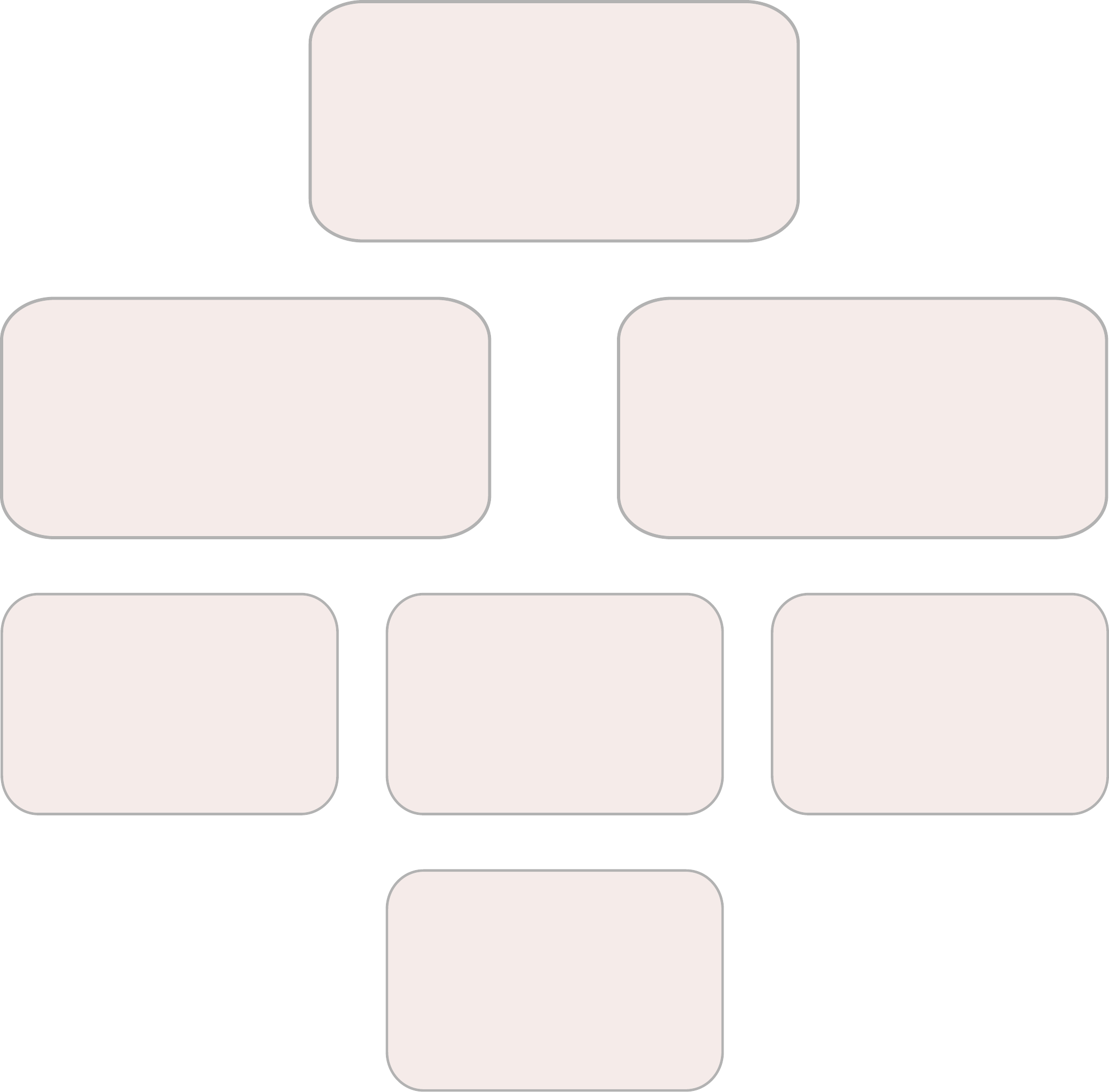_tex}}
	\caption{An overview of the three type inference procedures via a minimal example.}
\label{fig:fullexample}
\end{figure}
This section introduces our general framework, shown in~\cref{fig:overview}
which we instantiate in the next section by building a tool for
predicting types in TypeScript.
\cref{fig:fullexample} illustrates our general framework through a simplified running example of predicting types.

 Our input is a minimal  function
		with no type annotations on its parameters or result.
		%
		%
		Our goal is to exploit both logical and natural constraints to suggest
		more specific types.
		To begin, in Box (a), we propose fresh type annotations  \textcolor{mygreen}{\texttt{START}} and \textcolor{mygreen}{\texttt{END}} (uppercasing the identifier) for each parameter and \textcolor{mygreen}{\texttt{ADDNUM}} for the return type.
		We insert these annotations into the function's definition.
		Our \emph{logical constraints} on these types represent knowledge obtained
		by a symbolic analysis of the code in the function's body.
		In our example, the use of a binary operation implies that the two parameter types are equal.
		Box (c) shows a minimal set of logical constraints that state
		that \textcolor{RoyalBlue}{\texttt{addNum}}'s two operands have the same type.
		In general, the logical constraints can be much more complex than our simple example.
		If we only have logical constraints, we cannot tell
		whether \lstinline{string} or \lstinline{number} is a better solution,
		and so may fall back to the type \lstinline{any}.
		The crux of our approach is to take into account \emph{natural constraints};
		that is, statistical properties learnt from a source code corpus that seek to
		capture human intention.
		In particular, we use a machine learning model to capture naming conventions over types.
		We represent the solution space for our logical or natural constraints
		or their combination
		as a $V \times T$ matrix $P$ of the form in Box (b):
		each row vector is a discrete probability distribution
		over our universe of $T=3$ concrete types
		(\lstinline{number}, \lstinline{string}, and \lstinline{any})
		for one of our $V=3$ identifiers.
		Box (d) shows the natural constraints $\mathcal{M}$ induced by the identifier names
		for the parameters and the function name itself.
		Intuitively, Box (d) shows that a programmer
		is more likely to name a variable \lstinline{start} or \lstinline{end}
		if she intends to use it as a \lstinline{number} than as a \lstinline{string}.
		%
		%
		We can relax the boolean constraint
		to a numerical function on probabilities as shown in Box (c).
		When we numerically optimise the resulting expression,
		we obtain the matrix in Box (e);
		it predicts that both variables are strings with high probability.
		Although the objective function is symmetric
		between \lstinline{string} and \lstinline{number},
		the solution in (e) is asymmetric because it depends on the initialisation
		of the optimiser.
		Finally, Box (f) shows an optimisation objective that
		combines both sources of information:
		$E$ consists of the logical constraints
		and each probability vector $\mu_v$ (the row of $\mathcal{M}$ for $v$)
		is the natural constraint for variable $v$.
		Box (f) also shows the solution matrix and Box (g) shows the induced type annotations,
    		now all predicted to be \lstinline{number}.
%

\subsection{An Outline of Probabilistic Type Inference}

We consider a dynamic language of untyped programs that is equipped with an existing deterministic type system,
that requires type annotations on identifiers.
Given a program $U$ plus a typing environment $\Gamma$ let $\Gamma \vdash U$ mean that the program $U$ is well-typed according to the (deterministic) type system, given types for identifiers provided by $\Gamma$.
Formally, a typing environment $\Gamma$ is a finite function
with domain $\{ x_v \mid v \in 1 \ldots V\}$, where $x_v$  is an identifier, and range $\{ l_\tau \mid \tau \in 1 \dots T \}$, where each $l_\tau$ is a literal type.
Given an untyped program $U$,
let \emph{probabilistic type inference} consist of these steps:
\begin{enumerate}
	\item We choose a finite universe of $T$ distinct literal types $\{ l_\tau \mid \tau \in 1 \dots T \}$.
	\item We compute a set $\{ x_v \mid v \in 1 \ldots V\}$ of a number $V$ of distinct identifiers in $U$ that need to be assigned types.
	\item \label{step:constraints} We extract a set of constraints from $U$.
	\item \label{step:optimise} By optimising these constraints, we construct a matrix $P$ with $V$ rows and $T$ columns,
	      such that each row is a probability vector
	      (a discrete distribution over the $T$ literal types).
	\item For each identifier $x_v$, we set type $t_v$ to the literal type $l_\tau$ which we compute from the $v$th probability vector (the one for identifier $x_v$).  In this work, we pick the column $\tau$ that has the maximum probability in $x_v$'s probability vector.
	\item The outcome is the environment $\Gamma = \{ x_v : t_v \mid v \in 1 \ldots V\}$.
\end{enumerate}

We say that probabilistic type inference is \emph{successful} if $\Gamma \vdash U$, that is, the untyped program $U$ is well-typed according to the deterministic type system.
Since several steps may involve approximation, the prediction $\Gamma$ may only be partially correct.
Still, given a known $\hat{\Gamma}$ such that $\hat{\Gamma} \vdash U$ we can measure how well $\Gamma$ has predicted the identifiers and types of $\hat{\Gamma}$.
A key idea is that there are two sorts of constraints in step~(\ref{step:constraints}): logical constraints and natural
constraints.

A \emph{logical constraint} is a formula $E$ that describes
necessary conditions for $U$ to be well-typed.
In principle, $E$ can be any formula such that if $\Gamma \vdash U,$
then $\Gamma$ satisfies $E$.
Thus, the logical constraints
do not need to uniquely determine $\Gamma$.
For this reason, a \emph{natural constraint}
encodes less-certain information about $\Gamma$,
for example, based on comments or names.
Just as we can conceptualise the logical
constraints as a function to the set of boolean values $\{0, 1\}$,
we can conceptualise the natural constraints as functions
that map $\Gamma$ to the set of probabilities $[0, 1]$, which can be interpreted
as a prediction of the probability that $\Gamma$ would
be successful. To combine these two constraints, we relax the boolean operations to continuous operators on $[0, 1]$.
Since we can conceptualise $E$ as a function
that maps $\Gamma$ to a boolean value $\{0, 1\}$,
we relax this function to map to $[0,1]$, using
a continuous interpretation of the semantics of $E$.
Similarly, we relax $\Gamma$ to a $V \times T$ matrix of probabilities.
Having done this,
we formalise type inference as a problem in
numerical optimisation, with the goal to find a relaxed type assignment
that satisfies as much as possible both sorts of constraints.
The result of this optimisation procedure is the
matrix $P$ of probabilities described in step~(\ref{step:optimise}).

\subsection{\textit{Logical} Constraints in Continuous Space}\label{ssec:logcon}

Logical constraints are extracted from our untyped input program $U$ using
standard program analysis techniques.  Here, we rely on a \emph{Constraint
	Generator} for this purpose. \cref{ssec:logprodts} describes its
realisation.  The generator takes into account a set of rules that the type system
enforces and produces a boolean type constraint for them.

In this work, we consider the following logical constraints.
\begin{definition}[\emph{Grammar of Logical Constraints}]\label{def:log-gram}
	A \emph{logical constraint} is an expression $E$ of the following form:
	\begin{equation}\label{eq:gram}
    \begin{IEEEeqnarraybox}[][c]{rl}
		E & \> ::= x_v \mathrel{is} l_\tau \\ 
		  & \>\mid{} \mathrel{not} E      \\
		  & \>\mid E \mathrel{and} E      \\
		  & \>\mid E \mathrel{or} E.
	\end{IEEEeqnarraybox}
\end{equation}

	Let $\mathcal{E}$ be the set of all logical constraints.
\end{definition}

Recall that a typing environment $\Gamma$ is a finite function
with domain $\{ x_v \mid v \in 1 \ldots V\}$, and range $\{ l_\tau \mid \tau \in 1 \dots T \}$.
The standard \emph{logical satisfaction} relation $\Gamma \models E$, is defined by induction on the structure of $E$, as follows.
The typing environment $\Gamma$ plays the role of a model for the formula $E$.

\begin{equation}\label{eq:logsat}
    \begin{IEEEeqnarraybox}[][c]{clcl}
        \Gamma & \> \models x_v \mathrel{is} l_\tau & \Leftrightarrow \> & \Gamma(x_v)=l_\tau\\
        \Gamma & \> \models{} \mathrel{not} E \> & \Leftrightarrow & \mbox{not $\Gamma \models E$}\\
        \Gamma & \> \models E_1 \mathrel{and} E_2\> & \Leftrightarrow & \mbox{$\Gamma \models E_1$ and $\Gamma \models E_2$} \\
        \Gamma & \> \models E_1 \mathrel{or} E_2 \> & \Leftrightarrow & \mbox{$\Gamma \models E_1$ or $\Gamma \models E_2$.}
    \end{IEEEeqnarraybox}
\end{equation}

\paragraph{Continuous Relaxation}
We explain how to specify a \emph{continuous relaxation} of the discrete logical semantics. Intuitively, the logical semantics defines a truth function that maps typing environments to $\{0, 1\}$; a continuous
relaxation extends the range of the truth function to $[0, 1].$
To see this, start with two auxiliary definitions:
\begin{itemize}
    \item We define $\Pi^{V \times T}$ to be the set
of all \emph{probability matrices} of size $V \times T$,
that is, matrices of the form $P = \begin{bmatrix} \bm{p}_1 & \ldots & \bm{p}_{V} \end{bmatrix}^\mathsf{T}$,
where each $\bm{p}_v = \begin{bmatrix} p_{v,1} & \ldots & p_{v,{T}} \end{bmatrix}^\mathsf{T}$
is a vector that defines a probability distribution over concrete types.
    \item We convert an environment $\Gamma$ into a $V \times T$ binary matrix $B(\Gamma)$ by setting $b_{v,\tau} = 1$ if $(x_v, l_\tau) \in \Gamma,$ and 0 otherwise.
    Each binary matrix is also a probability matrix: $B(\Gamma) \in \Pi^{V \times T}$.
\end{itemize}
Given a formula $E,$ we define a truth function $f_E: \{0, 1\}^{V \times T} \rightarrow \{0, 1\}$
that maps binary matrices to $\{0, 1\}$, namely, for all $\Gamma,$ we define $f_{E}(B(\Gamma)) = 1$ if and only if $\Gamma \models E$.
A \emph{relaxed semantics} is a continuous function
that always agrees with the logical semantics, that is,
a relaxed semantics is a function
$\tilde{f}_{E} : \Pi^{V \times T}  \rightarrow [0, 1]$
such that for all formulas $E$ and environments $\Gamma$,
$\tilde{f}_{E}(B(\Gamma)) = f_E(B(\Gamma))$.
Essentially, a relaxed semantics extends the domain and range
of $f_E$ to be continuous instead of discrete.


%
Our \emph{continuous semantics} (or relaxed semantics) $\qqpi{P}{E}$ is a function $\Pi^{V \times T} \times \mathcal{E} \rightarrow [0, 1]$,
defined by induction on the structure of $E$, as follows.
\begin{equation}\label{eq:logical}
    \begin{IEEEeqnarraybox}[][c]{rl}
        \qqpi{P}{x_v \mathrel{is} l_\tau} & \> = p_{v,\tau}\\
        \qqpi{P}{\mathrel{not} E} & \> = 1-\qqpi{P}{E}\\
        \qqpi{P}{E_1 \mathrel{and} E_2} & \> = \qqpi{P}{E_1} \cdot \qqpi{P}{E_2} \\
	\qqpi{P}{E_1 \mathrel{or} E_2} & \> = \qqpi{P}{E_1} + \qqpi{P}{E_2} - \qqpi{P}{E_1}\cdot\qqpi{P}{E_2}.
    \end{IEEEeqnarraybox}
\end{equation}
(In the actual implementation, we use logits instead of probabilities
for numerical stability, see \cref{app:appendix-logit}.)
Our semantics is based on standard many-valued interpretations of propositional logic formulas as described for instance by Hajek et al.~\cite{hajek98}; however, our atomic propositions $x_v \mathrel{is} l_\tau$ and their interpretation via the matrix $P$ are original.
To interpret conjunction, we use what is known as the \emph{product $t$-norm}, where the conjunction of two constraints is interpreted as the numeric product of their interpretations:
$\qqpi{P}{E_1 \mathrel{and} E_2} = \qqpi{P}{E_1} \cdot \qqpi{P}{E_2}$.
We make this choice because the numeric product is smooth and fits with our optimisation-based approach.
The product $t$-norm has already been used for obtaining relaxed logical semantics in machine learning, for example in~\cite{rocktaschel15}.
Other choices are possible as we discuss in \cref{ssec:softlogic}.


Our continuous semantics respects the duality between conjunction and disjunction:
\begin{lemma}[Duality]
For all $E_1$, $E_2$, and $P$:
\begin{enumerate}
    \item $\qqpi{P}{\mathrel{not}(E_1 \mathrel{and} E_2)} = \qqpi{P}{(\mathrel{not} E_1) \mathrel{or} (\mathrel{not} E_2)}$
    \item $\qqpi{P}{\mathrel{not}(E_1 \mathrel{or} E_2)} = \qqpi{P}{(\mathrel{not} E_1) \mathrel{and} (\mathrel{not} E_2)}$
\end{enumerate}
\end{lemma}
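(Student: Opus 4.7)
The plan is to verify each identity by direct algebraic manipulation, unfolding the continuous semantics in \eqref{eq:logical}. Both equations involve only the two compound operators (\textit{and}, \textit{or}, \textit{not}) at the top level, so I do not need induction on the structure of $E_1, E_2$; I can treat $\qqpi{P}{E_1}$ and $\qqpi{P}{E_2}$ as opaque real numbers in $[0,1]$.

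For concreteness, I would abbreviate $a \coloneqq \qqpi{P}{E_1}$ and $b \coloneqq \qqpi{P}{E_2}$. For part (1), the left-hand side unfolds as
$\qqpi{P}{\mathrel{not}(E_1 \mathrel{and} E_2)} = 1 - \qqpi{P}{E_1 \mathrel{and} E_2} = 1 - ab$,
while the right-hand side unfolds as
$\qqpi{P}{(\mathrel{not} E_1) \mathrel{or} (\mathrel{not} E_2)} = (1-a) + (1-b) - (1-a)(1-b)$,
which simplifies to $1-ab$ after expanding the product and cancelling. For part (2), the left-hand side becomes $1-(a+b-ab) = 1 - a - b + ab$, and the right-hand side becomes $(1-a)(1-b)$, which is exactly $1 - a - b + ab$.

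So the whole argument reduces to two elementary polynomial identities in $a$ and $b$, valid for all real values, in particular for $a,b \in [0,1]$. There is no genuine obstacle here; the only small subtlety is to make explicit that the identities are purely algebraic and do not rely on $a$ or $b$ being in $[0,1]$, which is why the relaxed semantics still enjoys the De Morgan duality even away from the boolean corners. I would present the proof as two displayed chains of equalities, one per clause, each labelled by the defining equation from \eqref{eq:logical} it invokes.
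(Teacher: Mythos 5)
Your proof is correct and follows essentially the same route as the paper's: both sides of each identity are unfolded via the continuous semantics in \eqref{eq:logical} and shown equal by expanding and cancelling the resulting polynomials in $\qqpi{P}{E_1}$ and $\qqpi{P}{E_2}$. Your added observation that the identities hold for arbitrary reals, not just values in $[0,1]$, is a small but accurate refinement beyond what the paper records.
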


The following asserts essentially that the relaxed semantics is a continuous function that always agrees with the logical semantics.
\begin{theorem}[Relaxation]\label{thm:soft2hard}
For all E and $\Gamma$, $\qqpi{B(\Gamma)}{E} = 1 \Leftrightarrow \Gamma \models E$.
\end{theorem}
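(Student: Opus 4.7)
The plan is to prove the theorem by structural induction on the formula $E$, using the grammar given in \cref{def:log-gram}. A small subtlety is that the raw statement $\qqpi{B(\Gamma)}{E}=1 \Leftrightarrow \Gamma\models E$ is not quite enough to drive the induction through the negation case: from $\qqpi{B(\Gamma)}{E}\neq 1$ we cannot directly conclude $\qqpi{B(\Gamma)}{E}=0$ unless we know a priori that the value lies in $\{0,1\}$. I will therefore strengthen the inductive hypothesis to the conjunction of two clauses: (a) $\qqpi{B(\Gamma)}{E}\in\{0,1\}$, and (b) $\qqpi{B(\Gamma)}{E}=1 \Leftrightarrow \Gamma\models E$ (equivalently, $\qqpi{B(\Gamma)}{E}=0 \Leftrightarrow \Gamma\not\models E$). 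The theorem is then the (b)-part of the strengthened statement.

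For the base case $E = x_v \mathrel{is} l_\tau$, the definition gives $\qqpi{B(\Gamma)}{x_v \mathrel{is} l_\tau} = b_{v,\tau}$, which is $0$ or $1$ by construction of $B(\Gamma)$, and equals $1$ iff $(x_v,l_\tau)\in\Gamma$, i.e.\ iff $\Gamma(x_v)=l_\tau$, i.e.\ iff $\Gamma\models x_v\mathrel{is} l_\tau$ by \eqref{eq:logsat}. For $E=\mathrel{not} E'$, the IH gives $\qqpi{B(\Gamma)}{E'}\in\{0,1\}$, so $1-\qqpi{B(\Gamma)}{E'}\in\{0,1\}$; and this value is $1$ iff $\qqpi{B(\Gamma)}{E'}=0$ iff $\Gamma\not\models E'$ (by IH) iff $\Gamma\models \mathrel{not} E'$.

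For the conjunction $E_1 \mathrel{and} E_2$, by IH both operands lie in $\{0,1\}$, hence so does their product, and the product is $1$ iff both are $1$, i.e.\ iff $\Gamma\models E_1$ and $\Gamma\models E_2$, i.e.\ iff $\Gamma\models E_1\mathrel{and} E_2$. For the disjunction $E_1 \mathrel{or} E_2$, write $a=\qqpi{B(\Gamma)}{E_1}$ and $b=\qqpi{B(\Gamma)}{E_2}$, with $a,b\in\{0,1\}$ by IH. A direct case split on the four possible value pairs shows that $a+b-ab$ also lies in $\{0,1\}$, and equals $1$ iff $a=1$ or $b=1$; by IH this is equivalent to $\Gamma\models E_1$ or $\Gamma\models E_2$, i.e.\ to $\Gamma\models E_1\mathrel{or} E_2$. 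This closes the induction and yields the theorem.

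The argument is elementary, so there is no real obstacle; the one place to be careful is exactly the point flagged above, namely to carry the auxiliary clause $\qqpi{B(\Gamma)}{E}\in\{0,1\}$ through the induction so that the negation and disjunction clauses can convert ``not equal to $1$'' into ``equal to $0$'' without appealing to values outside the Boolean extremes.
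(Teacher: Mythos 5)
Your proof is correct and follows essentially the same route as the paper: structural induction on $E$, with the key auxiliary fact that $\qqpi{B(\Gamma)}{E}\in\{0,1\}$ used to turn ``not equal to $1$'' into ``equal to $0$'' in the negation and disjunction cases. The only difference is organisational --- the paper states this fact as a separate lemma (proved by its own structural induction) and cites it, whereas you fold it into a strengthened inductive hypothesis; the underlying argument is identical.
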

The proof is by induction on the structure of $E$;
the details are in~\cref{app:proofs}.

Our general formulation of a relaxed semantics was in terms of the functions $f$ and $\tilde{f}$,
where we defined $f$ by $f_{E}(B(\Gamma)) = 1$ if and only if $\Gamma \models E$.
We sought a relaxed semantics $\tilde{f}$ which we can now define by: $\tilde{f}_E(P) = \qqpi{P}{E}$.
Our desired equation $\tilde{f}_{E}(B(\Gamma)) = f_E(B(\Gamma))$,
for all formulas $E$ and environments $\Gamma$,
follows as a corollary of \cref{thm:soft2hard}.

Recall that in our setting, we know $E$ but
do not know $\Gamma$.
To address that, we observe that because the continuous semantics $\tilde{f}_E(P) = \qqpi{P}{E}$ is a function of $P$, we can optimise numerically
the function $\tilde{f}_E(P)$ with respect to $P \in \Pi^{V \times T}$.
If the optimisation is successful in finding an optimal value $P^*$ such that
$P^* \approx B(\Gamma)$ for some $\Gamma$ and that $\tilde{f}_E(P^*) = \qqpi{P^*}{E} = 1$,
then the theorem tells us that we have a typing environment $\Gamma$ that models $E$.




\subsection{\textit{Natural} Constraints via Machine Learning}\label{ssec:natcon}

A complementary source of information about types arises from statistical dependencies
in the source code of the program.  For example, names of variables provide
information about their types~\cite{xu16}, natural language in
method-level comments provide information about function types \cite{malik19},
and lexically nearby tokens provide information
about a variable's type~\cite{wei20,hellendoorn18}.
This information is indirect, and extremely difficult to formalise,
but we can still hope to exploit it by applying machine learning
to large corpora of source code.

Recently, the software engineering
community has adopted the term \emph{naturalness of source code} to refer to
the concept that programs have statistical regularities because
they are written by humans to be
understood by humans~\cite{hindle12}.
Following the idea that the naturalness in source code may be in part responsible
for the effectiveness of this information, we
refer generically to indirect, statistical
constraints about types as \emph{natural constraints}.
Because natural constraints are uncertain, they are naturally formalised
as probabilities.
A natural constraint is a mapping from a type variable to a vector
of probabilities
over possible types.
\begin{definition}[\emph{Natural Constraints}]\label{eq:natural}
	For each identifier $x_v$ in a program $U$,
	a \emph{natural constraint} is a probability vector $\bm{\mu}_v = [\mu_{v,1}, \ldots, \mu_{v,T}]^\mathsf{T}$.
	We aggregate the probability vectors of the learning model in a matrix
	defined as $\mathcal{M} = \begin{bmatrix} \bm{\mu}_1 & \ldots & \bm{\mu}_{V} \end{bmatrix}^\mathsf{T}$.
\end{definition}

In principle, natural constraints can be defined based on any property of $U$,
including names and comments.
In this paper, we consider a simple but practically effective example of
natural constraint, namely, a deep network that predicts the type
of a variable from the characters in its name.
We consider each variable identifier $x_v$ to be a character sequence $(c_{v1} \ldots c_{vN})$,
where each $c_{vi}$ is a character.
(This instantiation of the natural constraint is defined
only on types for identifiers that occur in the source code,
such as a function identifier or a parameter identifier.)
This is a classification problem, where the input is $x_v$,
and the output classes are the set of $T$ concrete types.
Ideally, the classifier would learn that identifier names that are lexically similar
tend to have similar types, and specifically which subsequences of the character names,
like \texttt{lst}, are highly predictive of the type, and which subsequences are less predictive.
One simple way to do so is to use a recurrent neural network (RNN).

For our purposes, an RNN is simply a function $(\bm{h}_{i-1}, z_i) \mapsto \bm{h}_{i}$
that maps a state vector $\bm{h}_{i-1} \in \mathbb{R}^H$
and an arbitrary input $z_i$ to an updated state vector $\bm{h}_{i}  \in \mathbb{R}^H$.
(The dimension $H$ is one of the hyperparameters of the model, which can be tuned
to obtain the best performance.)
The RNN has continuous parameters that are learned to fit a given data set,
but we elide these parameters to lighten the notation, because they are trained in a standard way.
We use a particular variant of an RNN called a
long-short term memory network (LSTM)~\cite{hochreiter97},
which has proven to be particularly effective both for natural language
and for source code~\cite{sundermeyer2012,melis17,white2015,dam16}.
We write the LSTM as $\text{LSTM}(\bm{h}_{i-1}, z_i)$.

With this background, we can describe the specific natural constraint that we use.
Given the name $x_v = (c_{v1} \ldots c_{vN}),$ we input each character $c_{vi}$ to the LSTM,
obtaining a final state vector $\bm{h}_N,$ which is then passed as input to a small
neural network that outputs the natural constraint $\bm{\mu}_v$.
That is, we define
\begin{subequations}\label{eq:lstm}
	\begin{align}
		\bm{h}_i   & = \text{LSTM}(\bm{h}_{i-1}, c_{vi}) \qquad i \in 1, \ldots, N \\
		\bm{\mu}_v & = F(\bm{h}_N), \label{eq:lstmb}
	\end{align}
\end{subequations}
where $F: \mathbb{R}^H \rightarrow \mathbb{R}^T$ is a simple neural network.
In our instantiation of this natural constraint, we choose $F$ to be a feedforward neural network with
no additional hidden layers, as defined in \eqref{eq:feedforward}.
We provide more details regarding the particular structure of our neural network in \cref{ssec:natprodts}.

This network structure is, by now, a fairly standard architectural motif in deep learning.
More sophisticated networks could certainly be employed, but are left to future work.

\subsection{Combining Logical and Natural Constraints to Form an Optimisation Problem}\label{ssec:optimisation}
Logical constraints pose challenges to the probabilistic world of
machine learning.  Neural networks cannot handle hard constraints explicitly and 
thus it is not straightforward how to incorporate the logical rules that they must follow.
Our way around that problem is to relax the logical constraints to numerical
space and combine them with the natural constraints through a continuous
optimisation problem.

Intuitively, we design the optimisation problem to be over
probability matrices $P \in \Pi^{V \times T}$; we wish to find
$P$ that is as close as possible to the natural constraints $\mathcal{M}$
subject to the logical constraints being satisfied.
A simple way to quantify the distance is via the \emph{Euclidean norm} $|| \cdot ||_2$ of a vector,  which is a convex function and thus well suited with our optimisation approach.
Hence, we obtain the constrained optimisation problem
\begin{equation}\label{eq:opt_naive}
    \begin{IEEEeqnarraybox}[][c]{c'l}
        \min_{P \in \mathbb{R}^{V \times T}} & \sum_v || \bm{p}_v - \bm{\mu}_v ||_2^2\\
        \text{s.t.} & p_{v\tau} \in [0, 1], \quad \forall v, \tau\\
        & \sum_{\tau=1}^T p_{v\tau} = 1, \quad \forall v \\
	& \qqpi{P}{E} = 1.
    \end{IEEEeqnarraybox}
\end{equation}

%
%

We use Mean Squared Error (MSE) here to quantify the performance of our fitting.
We could have used the Cross Entropy (CE), another common loss function.  The
MSE is a proper scoring rule~\cite{gneiting07}, meaning that smaller values
correspond to better matching of our optimisation variables with the logical
constraints.  We do not claim any particular advantage of MSE versus CE.

Instead of solving optimisation problem \eqref{eq:opt_naive}, we proceed to make some remarks that exploit its structure.
First, we reparameterise the problem to remove the probability constraints, by using the softmax function
\begin{equation}\label{eq:softmax}
	\sigma(\bm{x}) = \left[\frac{\exp\{x_1\}}{\sum_i \exp\{x_i\}}, \frac{\exp\{x_2\}}{\sum_i \exp\{x_i\}}, \cdots \right]^\mathsf{T},
\end{equation}
which maps real-valued vectors to probability vectors.
Our transformed problem takes the form
\begin{equation}
	\begin{aligned}\label{eq:opt_no_prob}
		\underset{Y \in \mathbb{R}^{V \times T}}{\mathrm{min}} & \quad
		\sum_v || \sigma(\bm{y}_v)^\mathsf{T} - \bm{\mu}_v ||_2^2                                                                         \\
		\text{s.t. } & \quad
		\qqpi{[\sigma(\bm{y}_1), \ldots, \sigma(\bm{y}_{V})]^\mathsf{T}}{E} -1 = 0.
	\end{aligned}
\end{equation}
It is easy to see that for $Y^*$ that minimises \eqref{eq:opt_no_prob}, then
$P^* = [\sigma(\bm{y}_1), \ldots, \allowbreak \sigma(\bm{y}_{V})]^\mathsf{T}$
minimises \eqref{eq:opt_naive}.
We remove the last constraint by introducing a multiplier $\lambda \in \mathbb{R}$, 
yielding the final form of our optimisation problem
\begin{equation}\label{eq:objective}
	\min_{Y, \lambda}
	\sum_v || \sigma(\bm{y}_v)^\mathsf{T} - \bm{\mu}_v ||_2^2
	- \lambda \big(\qqpi{[\sigma(\bm{y}_1), \ldots, \sigma(\bm{y}_{V})]^\mathsf{T}}{E} - 1\big).
\end{equation}
This corresponds to the Lagrange multiplier method~\cite{bertsekas82}.
According to the first-order necessary conditions, at a saddle point of the objective function of~\eqref{eq:objective} the following conditions should be satisfied
\begin{subequations}
\begin{align}
    \nabla_Y \mathcal{L}(Y, \lambda) & = 0\\
    \nabla_\lambda \mathcal{L}(Y, \lambda) & = 0,\label{eq:langrange_constraint}
\end{align}
\end{subequations}
where $\mathcal{L}(Y, \lambda)$ equals the objective of~\eqref{eq:objective}.
Equation~\eqref{eq:langrange_constraint} guarantees that at our optimisation's problem solution, the equality constraint in~\eqref{eq:opt_no_prob} is satisfied.
In general, the necessary conditions are concerned with a saddle point.
In our experience, we have not faced any issue converging to optimal solutions when starting with large initial values for $\lambda$.
Nevertheless, a more systematic study is required for removing such possibilities.

Finally, we note that by adding more terms to the combined objective function, we can 
extend the sources of information we are getting as inputs to other channels, such as dynamic analysis.






\section{\projectname: Predicting Types for TypeScript}\label{sec:prodts}

To evaluate our approach in a real-world scenario, we
implement an end-to-end application, called \projectname, which aims
to suggest missing types for TypeScript files. The goal of
\projectname's implementation 
is to serve as a proof of concept for our general framework.
Thus, we acknowledge that our mechanisms to generate logical and natural constraints are both pragmatic under-approximations, and could in further work be replaced by more sophisticated mechanisms.

Regarding natural constraints, as every learning model outputs a probability vector over types, we can extend our method to include natural constraints generated by LambdaNet~\cite{wei20}, DeepTyper~\cite{hellendoorn18} or indeed any other deep learning approach that offers the same kind of output.

Regarding logical constraints, unfortunately the TypeScript typechecker does not produce explicit logical formulas to constrain unknown types.
Therefore, to build our prototype we generate logical constraints
by relying on a mode of operation where the compiler infer some types from usage on TypeScript code.
By doing that we obtain less information than we would if a full typechecker were to emit logical constraints, but nonetheless show state-of-the-art performance.

Even with the limitations of these two implementation choices for extracting logical and natural constraints, we show that \projectname outperforms JSNice, DeepTyper, and LambdaNet.

\subsection{Background: TypeScript's Type System}\label{ssec:intro-typescript}

TypeScript~\cite{typescript} is a typed superset of
JavaScript designed for developing large-scale, stable applications.
TypeScript's compiler (\lstinline+tsc+) typechecks TypeScript programs then emits plain JavaScript,
to leverage the fact that JavaScript is the only cross-platform
language that runs in any browser, any host, and any OS.

Structural type systems consider record types (classes), whose fields or members have the same names and types, to be equal.
TypeScript supports a
structural type system because it permits TypeScript to handle many JavaScript idioms that depend on dynamic typing.
One of the main goals of TypeScript's designers is to support idiomatic
JavaScript to
provide a smooth transition from JavaScript to TypeScript.
Therefore, TypeScript's type system is deliberately
unsound~\cite{understandtypescript}.  It is an optional type system, whose
annotations can be omitted and have no effect on runtime. As, TypeScript erases 
them when transpiling to JavaScript~\cite{understandtypescript}.
TypeScript's type system defaults to assigning its \texttt{any} type to unannotated parameters, and methods or properties signatures.  

\subsection{\textit{Logical} Constraints for TypeScript}\label{ssec:logprodts}
We cannot rely on the TypeScript compiler~\cite{typescript} \lstinline+tsc+ directly to generate the logical constraints of \cref{ssec:logcon}, because \lstinline+tsc+ does not construct logical formulas explicitly during typechecking.
Still, we can rely on a mode of operation where the compiler infers types on TypeScript code with no types annotations. To ensure that no types 
are present to the input files the first step 
of our process it to parse the gold ones and remove all type annotations, then we continue with generating the constraints.

Specifically, to generate logical constraints on argument types, we build on a command-line tool, named TypeStat~\cite{typestat}, that calls \lstinline+tsc+ to predict type hints for function arguments, usually to provide codefixes within a development environment.
When the predicted type for an argument identifier $x_v$ is a literal $l_\tau$ within our universe we emit the constraint $x_v \mathrel{is} l_\tau$.
When the predicted type for $x_v$ is a union type $(l_1 \mid \dots \mid l_n)$ of literals, we emit
the disjunction $((x_v \mathrel{is} l_1) \mathrel{or} \dots \mathrel{or} (x_v \mathrel{is} l_n))$.

Inferring return types is easier for the compiler than inferring argument types.
We can harvest return types as $\mathrel{is}$ type constraints by calling \lstinline+tsc+ directly.
Since inference of the return types is more straightforward than argument types, we see in~\cref{tab:typeacc1} that the \textit{Logical} phase of our method works better for return types than parameters.

Overall, for each function or group of functions in a file, we return a conjunction of the logical constraints generated for parameter and return types, as described above. We note
that our logical constraints include propositional logic, and therefore are able to express a wide range of interesting type constraints~\cite{odersky99,pottier05}.


\subsection{\textit{Natural} Constraints for TypeScript}\label{ssec:natprodts}

To learn naming conventions over types we use a Char-Level LSTM which predicts for any identifier a probability vector over all the available types. Our model is trained on
\textit{(id, type)} pairs with the goal to learn naming conventions
for identifiers, treated as sequences of characters.
The main intuition behind this choice is that
developers commonly use multiple abbreviations for the same word and
this family of abbreviations shares a type.
A Char-Level model is well-suited to predict the type for any identifier in
an abbreviation families.

\paragraph{Data Set for the LSTM}\label{par:datase} Following the work of 
~\cite{wei20} and~\cite{hellendoorn18},
to train our model we use as dataset the 300 
most starred Typescript projects from Github, 
containing between 500 to 10,000 lines of code.
Our dataset was randomly split by project into 70\% training data, 10\% validation data and 20\% test data. \cref{fig:Char-Level} shows a summary of the pipeline used to train our model, for specific implementation details of the LSTM refer to \cref{app:neural}.

\paragraph{Prediction Space}
We define our type vocabulary to consist of top-100 most common default library types in our training set. As Wei et al.~\cite{wei20} report, this prediction space covers 98\% of the non-any annotations for the training set.
Handling a larger set of types is straightforward, but we decided instead to work with the same set of library types used by \cite{hellendoorn18} and \cite{wei20}, to be consistent in our comparisons.
Conforming to the practice of prior work in this space \cite{wei20,hellendoorn18,xu16,raychev15}, 
we consider all different polymorphic type arguments to be \texttt{any}; for example a \texttt{Promise<boolean>} type corresponds to \texttt{Promise<any>}.
Accordingly higher-order functions correspond to the type \texttt{Function}.

\begin{figure}[!t]
    \centering
    \def\svgwidth{\linewidth}
    \input{./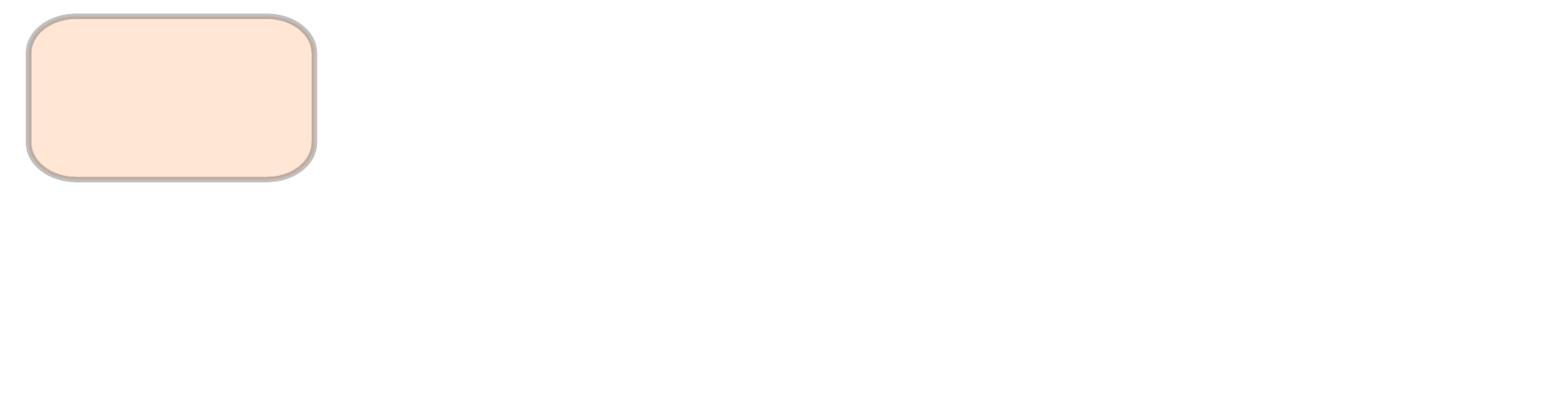_tex}
  \caption{Pipeline of learning naming conventions with 
  a Char-Level \textit{LSTM}, represented by a probability vector for each identifier.
  }\label{fig:Char-Level}
\end{figure}

\paragraph{Implementation Details}

Regarding the implementation details of the LSTM network, for the $F$ in \eqref{eq:lstmb},
we use a feedforward neural network
\begin{equation}
	F(\bm{h}) = \log\left( \sigma\left(\bm{h}A^T + b \right) \right),\label{eq:feedforward}
\end{equation}
where the $\log$ function is applied componentwise,
and $A$ and $b$ are learnable weights and bias.
The softmax function \eqref{eq:softmax} corresponds to the last layer of our neural network
and essentially maps the values of the previous layer to $[0, 1]$,
while the sum of all values is $1$ as expected for a probability vector as already explained.
We work in log space to help numerical stability since computing \eqref{eq:softmax} directly can be problematic.
As a result, $F$ outputs values in $[-\infty, 0]$.

We train the model by supplying sets of variable identifiers together with their known types,
and minimising a loss function.
Our loss function is the negative log likelihood function---conveniently combined with our log output---defined as
\begin{equation}
	L(\bm{y}) = -\sum_i log(\bm{y}_i).
\end{equation}
Essentially, we select, during training, the element that corresponds
to the correct label from the output $F$
and sum all the values of the correct labels for the entire training set.

\subsection{Combining \textit{Logical} and \textit{Natural} Constraints}
In our framework both solving the relaxed logical constraints alone, and combining them with the natural constraints correspond to an optimisation 
problem as described in~\eqref{eq:logical} and
~\eqref{eq:objective} respectively.
To find the minimum of the generated function we use 
\textsc{RMSprop}~\cite{tieleman2014};
an alternative to stochastic gradient descent~\cite{robbins51}, with an adaptive learning rate. Both the code for the deep
learning and the combined optimisation part is written in
PyTorch~\cite{paszke2017}.







\section{Evaluation of \projectname}
This section opens with \cref{sec:acc}, which defines the prediction and query spaces, explains the experimental setup and establishes the performance measure we use throughout the section.
\projectname is built to combine and exploit both logical and natural constraints, so \cref{ssec:ablation} quantifies their separate contributions and demonstrates their synergy.
It then compares and analyses the performance of \projectname with that of LambdaNet and DeepTyper.
It closes by comparing \projectname with JSNice, the pioneering work in probabilistic type inference.


\paragraph{Type Prediction Accuracy}\label{sec:acc}

For a statically typed language, let a declaration slot be a point in a program text where the grammar permits annotating an identifier with its type. The set of nodes that we predict types for includes variables (\textit{VAR}), parameters (\textit{PAR}), properties declarations and signatures (\textit{PROP}), return types for function declarations (\textit{FUN}) and method declarations (\textit{METH}).

Predicting types is a multi-class prediction task: at each annotation slot, we ask the predictor to propose a type from our type vocabulary, $\mathcal{T} = \{ l_\tau \mid \tau \in 1 \dots T \}$. We note that $\mathcal{T}$ does not include the gradual \texttt{any} type or Out-Of-Vocabulary \textit{OOV} token.  Concretely,
\cref{ssec:natprodts} defines $\mathcal{T}$.
Even if we regard some types as \textit{OOV} for our evaluation, we do allow all return types inferred by the \textit{Logical} part of our method to be present at the final output as they may be useful for the programmer.

\autoref{fig:type:prediction:query} shows the query space for types in an optional type setting.
TypeScript itself defines built-in types.  Library types are those types defined by the libraries a project imports and project types, sometimes called user types in the literature, are those types a project defines itself. 
In general, developer annotations are those slots that a developer is likely to annotate; in training data, we under-approximate these slots by the slots that a developer did annotate. 
The compiler inferable slots are those slots for which an optional compiler can infer a type given the developer annotations.  
Developers annotate some slots to document and clarify the code, to aid navigation and completion, and so that the compiler can infer types for other slots.  
Developer annotated slots are special for two reasons:  they provide a natural source of labelled training data and, since developers went to the effort of annotating them, relieving developers of the burden of doing so is clearly useful.


Let $\mathit{TP_i}$ be the number of times that a probabilistic type predictor \emph{correctly} labelled a slot with the $i^\text{th}$ type in $\mathcal{T}$;  let $\mathit{FP_i}$ be the number of times that the type predictor \emph{incorrectly} labelled a slot with the $i^\text{th}$ type.  Our ground truth for determining whether a prediction is correct is the set of developer annotated slots and the set of types that the compiler can infer; in our test set, we call this set the \texttt{gold} file. 
This is a working assumption in the sense that some files may contain errors~\cite{williams17}.

We report performance using Top-1 Accuracy~\cite{manning}, defined as 

\begin{align}\label{eq:acc}
    \frac{ \sum_i^T\mathit{TP_i}}{ \sum_i^T (\mathit{TP_i} + \mathit{FP_i})},
\end{align}
where $T=|\mathcal{T}|$. 
We perform our evaluation in all declaration and signatures annotation slots available in a TypeScript file.
\paragraph{Test Set} We evaluate \projectname on the same $20\%$ test set kept for the evaluation of our Char-Level model. Our data set consists of 2074 files from 60 GitHub packages with $\sim$10000 declaration annotation slots in total.
\begin{figure}[t]
\begin{lstlisting}[language=JavaScript]
    import { NotificationService, SimpleNotificationsModule }
        from 'angular2-notifications';

    class AppComponent {
        constructor(
         public service: Object
        ) {}
        open() {
            this.service.create('bla', 'blu');
        }
    }
\end{lstlisting}
   \vspace{-10mm}
    \caption{A snippet from \texttt{flauc$\_$angular2-notifications} project that fails to predict a correct type. On ln. 6 we predict that \lstinline{service} is an \lstinline{Object} as we don't have any \textit{Logical} constraint we base our decision based on the LSTM's prediction but the compiler complains that \lstinline{Property `create' does not exist on type `Object'}.The correct type is \lstinline{NotificationService},
    which has been imported.
}\label{fig:code-typecheck}
\end{figure}
\paragraph{Type Checking Results}
Our approach does not guarantee that
the resulting code always type check. Our only guarantee is that we respect the logical constraints that we generate from the code. There are cases where these constraints underdetermine the problem and the natural constraints help to mitigate that, but not necessarily successfully. Furthermore, optional typing checks for local type consistency, not the whole program guarantee of traditional type checking and that may lead to unspecified types from \lstinline{imports}. For a specific example, where this occurs and leads to a wrong prediction see~\cref{fig:code-typecheck}. Nevertheless, bridging two different sources of information, logical and natural, is not a trivial process and our aim in this paper is to show a principled way to tackle this problem by solving a corresponding numerical optimisation problem.


\subsection{Ablation Analysis:  Leveraging Both \emph{Logical} and \emph{Natural} Constraints}\label{ssec:ablation}

\begin{table}[t]
	\centering
	\caption{Ablation analysis of \projectname, the cells report Top-1 Accuracy(\%); \textit{FUN} and \textit{METH} refers to return types of functions and methods respectively, \textit{PAR}
	represents parameters, \textit{PROP} represents properties, and \textit{VAR} variables.
	Test set of 2074 files from 60 GitHub packages on $\sim$10000 annotation slots~\cite{wei20}.
	\\}\label{tab:typeacc1}

	\begin{tabular}{ccccccc}
		\toprule
		Tool  & $\textit{FUN}$ & 
		$\textit{METH}$ & 	$\textit{PAR}$ & 	$\textit{PROP}$& 	$\textit{VAR}$& $\textit{ALL}$ \\
		
		\midrule
		\textit{Logical}      & 0.69                                  & 0.71 & 0.06 & 0.08 & 0.42                                 & 0.23                               \\
		\textit{Natural}      & 0.42                                    & 0.52                                 & 0.82 & 0.40 & 0.37 & 0.69                                  \\
		\projectname  & 0.69         & 0.71         & 0.85 & 0.43 & 0.56 & 0.76\\
		\bottomrule
	\end{tabular}
\end{table}

\projectname has two phases --- \textit{Logical} and \textit{Natural} --- and combines them.  To evaluate the effect of each stage, \cref{tab:typeacc1} reports the Top-1 Accuracy of each stage.
\cref{tab:typeacc1}'s columns define disjoint sets of slots.
Regarding the \textit{Logical} approach, the results are significantly better for function and method return types
than parameter types.
This happens because the \projectname harvests a richer set of constraints for return types.
For parameter types, the \textit{Logical} stage  (\cref{ssec:logprodts})
most of the times fails to infer the corresponding type 
from its usage in code. An example where no hard constraint is available and the Char-Level fails to find a correct type is shown below in \cref{fig:code-typecheck}.


For the \textit{Natural} phase, the results swap, the prediction accuracy for the parameters' is almost double than the one for the return types. Our assumption is that
this is a result of largely using the same parameters ids over different projects, for example \lstinline{path}, than using the same function or methods ids. Nevertheless, the results from our Char-Level model indicate that just the naming of a variable carries a
lot of information about its actual type;   
Overall, \cref{tab:typeacc1} show that the \textit{Logical} and \textit{Natural} phases complements each other and thus their combination in \projectname greatly improves our type inference capabilities.

We note here that the concrete implementations of both phases of \projectname serve 
as a proof of concept to our theoretical approach
described in \cref{sec:framework}. Therefore,
 there is space of further improving the overall
 result by improving each of the two components with more sophisticated techniques.

\subsection{Comparison with DeepTyper and  LambdaNet}\label{ssec:typesubproblem}
\begin{table}[!t]
	\centering
		\caption{Top-1 Accuracy(\%) for DeepTyper, LambdaNet and \projectname.
		         (Same test set as Table~\ref{tab:typeacc1}.)
	\\}\label{tab:typeacc2}
		    \vspace{0.45cm}

	\begin{tabular}{ccccccc}
		\toprule
		Tool   &  Acc(\%) \\
		\midrule
		DeepTyper  & 0.61 \\
		LambdaNet  & 0.72 \\
		\projectname & 0.76\\
		\bottomrule
	\end{tabular}
\end{table}

\paragraph{DeepTyper}
There are two main differences between DeepTyper 
and \projectname that we need to address. Firstly, DeepTyper considers a much larger prediction space of $T=1100$ types, including many project types. Thus, to measure the accuracy, we restrict the prediction space to 100 library types (a subset of DeepTyper's vocabulary, exactly those LambdaNet's authors~\cite{wei20} chose when comparing DeepTyper to LambdaNet. We note that our \textit{Logical} phase is able to infer types that are out of this small prediction space, but for a fair evaluation we choose to treat them as \textit{OOV}.

Secondly, DeepTyper predicts a type, sometimes different, for each occurrence of an identifier, while we predict types only for declaration slots. To address this, we compare \projectname with a DeepTyper variant proposed by Wei et al.~\cite{wei20}.
This variant makes a single prediction for each identifier, by averaging over the RRN internal states for a particular identifier before the actual prediction.
The DeepTyper results we report are for this variant, retrained over the vocabulary of 100 library types specified above, using Top-1 Accuracy.

\cref{tab:typeacc2} summarises the results of our comparisons. We conjecture that \projectname outperforms DeepTyper mainly because \projectname's logical constraints define
a wider, lexically independent, prediction context than DeepTyper.
Perhaps taking into account only information in the vicinity as DeepTyper does can be problematic;
for instance, function definitions may be placed relatively far away from their calls and hence the context, that DeepTyper learns, is not very informative.
\begin{figure}[t]
    \centering
    \begin{minipage}[t]{.38\textwidth}
        \raggedright
\begin{lstlisting}[language=JavaScript,label={lst:gold}, numbers=left]
  function f1(
    x: boolean,
    z: Window,
    y: Event
  ): number {
      x = true;
      z = window;
      y = Event.prototype;
      return 1;
  }
\end{lstlisting}
   \vspace{3mm}
    \subcaption{Gold TypeScript file.}
    \end{minipage}
    \begin{minipage}[t]{.57\textwidth}
        \raggedleft
\begin{lstlisting}[language=Fake,label={lst:lncode}] 
function !$\mathscr{P}1\{\text{f}1\}($!
!$\mathscr{P}2\{\text{x}\}\textcolor{mygreen}{: \mathscr{L}}$\textcolor{mygreen}{[ty]\{Boolean\}}!,
!$\mathscr{P}3\{\text{z}\}\textcolor{Maroon}{: (\mathscr{L}}$\textcolor{Maroon}{[ty]\{String\} $\ne$\ }$\textcolor{Maroon}{\mathscr{L}}$\textcolor{Maroon}{[ty]\{Window\})}!,
!$\mathscr{P}4\{\text{y}\}\textcolor{Maroon}{: (\mathscr{L}}$\textcolor{Maroon}{[ty]\{Number\} $\ne$\ }$\textcolor{Maroon}{\mathscr{L}}$\textcolor{Maroon}{[ty]\{Event\})}!
): (!$\mathscr{P}5$$\textcolor{mygreen}{: (\mathscr{L}}$\textcolor{mygreen}{[ty]\{Number\}}!) {
    !$\mathscr{P}2\{\text{x}\} \leftarrow \mathscr{L}\{\text{Boolean}\}$!;
    !$\mathscr{P}3\{\text{z}\} \leftarrow \mathscr{L}\{\text{window}\}$!;
    !$\mathscr{P}4\{\text{y}\} \leftarrow \mathscr{L}\{\text{Event}\}\text{.prototype}$!;
    return !$\mathscr{P}5 = \mathscr{L}\{\text{Number}\}$!
}
\end{lstlisting}
    \subcaption{LambdaNet output.}
    \end{minipage}
    \caption{Minimal example showing 2 cases where LambdaNet gives incorrect predictions.}\label{fig:LambdaNet}
\end{figure}

\paragraph{LambdaNet} The comparison with LambdaNet is straightforward because they provide a pretrained model trained on the same dataset and for the same set of types as ours. 
\cref{tab:typeacc2} shows that our accuracy is on par with LambdaNet's, and indeed somewhat better.
We think that a reason for the small difference between LambdaNet's perfomance in all declaration slots and the reported performance in \cite{wei20} is due to the fact that LambdaNet fails to learn types that are 
inferable for the compiler but yet not apparent in 
the training data. We have found examples where this occurs. \cref{fig:LambdaNet} shows two examples.  
The parameters on lines 3 and 4 actually have type \texttt{Window} and \texttt{Event}, as you can see in \cref{lst:gold}, which contains the developer-annotated ground truth.
\cref{lst:lncode}, the figure on the right, shows that LambdaNet mispredicts their types as \texttt{String} and \texttt{Number}.
We conjecture that the misprediction is because of data sparsity.
LambdaNet correctly predicts the type of the first parameter because uses of \texttt{boolean} are relatively common in the training data, while uses of \texttt{Window} and \texttt{Event} are not, so the assignments on lines 7 and 8 provide too little signal for LambdaNet to pick up.
\projectname, in contrast, correctly predicts all three parameter types.
\projectname succeeds here because the assignments on lines 7 and 8 generate hard logical constraints that \projectname incorporates, \emph{at test time}, into its optimisation search for a satisfying type environment.
These examples may explain the difference between LambdaNet's and \projectname's prediction accuracy. 
This difference in performance between the two approaches will crop up whenever the training data lacks sufficient number of examples of a particular logical relation. 

Finally, we note 
that we see LambdaNet and indeed any learning approach is complementary to our work. In theory, it is 
straightforward to treat them as an instantiation of the \textit{Natural} phase, as a probability distribution over types described in our theoretical framework (\cref{sec:framework}).

\subsection{Comparison with JSNice}
\begin{table}[!t]
	\centering
	\caption{Top-1 Accuracy for JSNice and \projectname.
		     (Same test set as Table~\ref{tab:typeacc1}.)
	}~\label{tab:typeacc3}
	 \vspace{0.45cm}

	\begin{tabular}{ccccccc}
		\toprule
		Tool   & Acc \\
		\midrule
		JSNice     & 0.45\\
		\projectname & 0.74\\
		\bottomrule
	\end{tabular}
\end{table}
Only portions of the JSNice~\cite{raychev15} system have been made open source. 
The portion of the implementation that the authors have made public is not sufficient to retrain the JSNice models. 
Instead, we follow the approach of DeepTyper~\cite{hellendoorn18} and LambdaNet~\cite{wei20} and manually compare JSNice and \projectname over a smaller dataset. 
As JSNice targets JavaScript, it cannot predict types for classes or interfaces, so we report accuracy on top-level functions return types and parameters randomly sampled from our test set. The prediction
space for this comparison is restricted to JavaScript primitive types.

As \cref{tab:typeacc3} shows, \projectname outperforms JSNice. We conjecture that this is because JSNice exploits the relation paths between types up to a shallow depth, and thus it may not
capture some typing relevant element dependencies. 
In contrast, our logical constraints can leverage information of a more expansive context.
Additionally, \projectname's grouping together of names that share a type despite minor variations in their names could be proven useful for learning techniques.

\section{Related Work}\label{sec:related}

\projectname is a new form of probabilistic type inference that optimises over
both logical and natural constraints.  Related work spans classical, deterministic
type inference, soft logic for the relaxation of the constraints, and earlier machine learning approaches.

\subsection{Classical Type Inference}


Rich type inference mitigates the cost
of explicitly annotating types. This feature is an
inherent trait of strongly, statically-typed, functional languages (like Haskell or ML).


Dynamic languages have also started to pay more attention to typings. Several
JavaScript extensions, like Closure Compiler~\cite{closure}, Flow~\cite{flow} and
TypeScript (See \cref{ssec:intro-typescript}) are all focusing on enabling
sorts of static type checking for JavaScript.
%
%
However, these
extensions often fail to scale to realistic programs that make use of dynamic
evaluation and complex libraries, for example, jQuery, which cannot be analysed
precisely~\cite{jensen09}.
There are similar extensions for other popular scripting languages,
like~\cite{mypy}, an optional static type checker for Python,
or RuboCop~\cite{rubycop}, which serves as a static analyzer for Ruby by enforcing many of the guidelines
outlined in the community Ruby Style Guide~\cite{rubystyle}.

The quest for more modular and extensible static analysis techniques has
resulted in the development of richer type systems.
Refinement types, that is, subsets of types that satisfy a logical predicate (like Boolean expression),
constrain the set of values described by the type and hence allow the use of
modern logic solvers (such as SAT and SMT engines) to extend the
scope of invariants that can be statically verified.
An implementation of this concept comes with Logically Qualified Data Types,
abbreviated to Liquid Types.
DSOLVE is an early application of liquid type inference in OCAML~\cite{liquid}.
A type-checking algorithm, which relies on an SMT solver
to compute subtyping efficiently for a core, first-order functional language
enhanced with refinement types~\cite{semanticSMT}, provides a different
approach.
LiquidHaskell~\cite{refHaskell} is a static verifier of
Haskell based on Liquid Types via SMT and predicate
abstraction.
DependentJS~\cite{dependentJS} incorporates dependent types into JavaScript. We note also, that HM(X) is a family of constraint-based type
systems~\cite{odersky99,pottier05}, that fits our formulation of the logical constraints.

A line of research closely related to our work concerns
the specific problem of predicting a TypeScript declaration file for an underlying JavaScript library. Writing and maintaining a declaration file is a non-trivial process. Both TSCHECK~\cite{feldthaus14} and TSTEST~\cite{kristensen17} demonstrates the difficult of the task, by detecting numerous errors in the declaration files of most of the libraries they have checked. \cite{tstools17} created the TSINFER and TSEVOLVE tools to that to assist the 
programmer for creating and maintaining TypeScript declaration files from JavaScript files. These tools are based on a combination of a static and dynamic analysis that uses a recorded snapshot of a concretely initialised library to generate TypeScript declaration files from JavaScript libraries. Dynamic analysis for TypeScript could serve as a different
source of information in our type inference framework. We
could capture results from dynamic analysis by adding a new term to our objective~\eqref{eq:objective}.




\subsection{Probabilistic Type Inference}\label{sec:ml:over:source}

Although the interdisciplinary field between machine learning and programming
languages is still young, complete reviews of this area are
already available.
A detailed description of the area can be found in~\cite{vechev16}.
While~\cite{threepillars} is a position paper, which examines this research area by categorising
the challenges involved in three main, overlapping pillars: intention, invention, and adaptation.
An extensively survey work that probabilistically models source code via a learning component and complex representations of the underlying code is given in~\cite{allamanis17}.

A sub-field of this emerging area applies machine learning in probabilistic graphical models to infer semantic properties of programs, such as types.
The first example of this class
of approach was
JSNice~\cite{raychev15}, which uses probabilistic graphical models to infer types (and deobfuscate names) for JavaScript files. 
They use conditional random fields (CRF)~\cite{sutton12}, to encode variable relations between types, but not type constraints. 
\projectname differs from JSNice in incorporating logical type constraints and reformulating  probabilistic type inference as an optimisation problem.
LambdaNet~\cite{xu16} use a different  graphical model to statistically infer types for Python. 
Their method trains a
classifier for each project that predicts the
type of each variable from its name.
The classifier's predictions for each variable are combined with semantic constraints using a kind of graphical model called a \emph{factor graph}~\cite{yedidia2003}, which is
closely related to CRFs.
To build their factor graph, they leverage type hints
derived from data flow, attribute accesses, subtyping, and naming conventions. Compared to our work, \cite{xu16} requires
heuristically chosen weights in the factors that integrate naming and semantic constraints; these heuristics may need to be tuned separately for each new kind of semantic constraint that is added to the model. In contrast, our method relies on an optimisation that integrates semantic constraints
and naming constraints in a principled way.
An important advantage of our approach
is that we can introduce type constraints from any standard type inference engine by automatically relaxing them.

Most recent works have used deep learning approaches to tackle
the problem of probabilistic type inference. Hellendoorn et al. were the first to do so, by building 
DeepTyper, a tool that infers types for partially typed TypeScript code~\cite{hellendoorn18}. DeepTyper uses a bidirectional neural network that leverages local lexical information to make type predictions for every identifier but otherwise ignores type constraints. 
A strength of DeepTyper is that it can handle a very large type vocabulary that spans both library and user-defined types.

NL2Type, a tool by~\cite{malik19}, also takes
a deep learning approach to the task, using JSDoc comments as an additional type hint.
Our approach differs from DeepTyper and NL2Type in  incorporating logical type constraints.
Our reformulation of probabilistic type inference as an optimisation problem that incorporates explicitly logical constraints separates us from these two approaches. Williams et al. present a different application of
combining logical constraints and machine learning for inferring units in spreadsheets~\cite{williams20}.

LambdaNet~\cite{wei20} was the first to apply a \emph{graph neural network} (GNN) model \cite{Gilmer2017-qd,allamanis17a} to the probabilistic type inference task.
LambdaNet targets TypeScript and uses static analysis to build its GNN from the training data. 
This GNN combines logical and contextual (which subsumes \projectname's natural) constraints.
They are the first to be able to predict \emph{unseen} user-defined types by using a pointer-network-like architecture~\cite{vinyals15,Allamanis2016-su} to predict over an open vocabulary.
Typilus~\cite{allamanis20} is a GNN that employs one-shot learning to predict an open vocabulary of types, including rare and unseen user-defined types, for Python. 
Interestingly, both of these GNN models use an iterative
computation called message-passing to compute predictions,
which is closely related to the sum-product algorithm
used for inference in~\cite{xu16}, which also relies on message passing.
The main difference between our method and these two GNN-based approaches is that we do not learn the logical constraints but rather extracts and enforces them at test time.
Incorporating previously known, hard constraints into a learning
model is an effective way to improve its overall performance, but it does not imply that the result will respect them. What our approach does instead, is to offer a principled way to explicitly impose the logical constraints while constructively, 
and only at the places where is needed, absorbing information from the natural channel. In that sense, our work is complementary to each prior work above, as the described tools output
a probability vector over types, which our approach can take as its input to the
\textit{Natural} part of our combined optimisation equation \eqref{eq:objective}.

TypeWriter~\cite{pradel19} applies probabilistic type checking to infer argument and result types for Python programs.
To solve the problem that inferred types may not correctly typecheck, TypeWriter invokes a standard gradual type checker to validate different combinations of types suggested by probabilistic inference, so to suggest types that will correctly typecheck.
Adding a second phase to \projectname, in the spirit of TypeWriter, would guarantee that the results of \projectname would typecheck.

\section{Conclusion}\label{sec:conclusion}

This paper addresses the lack of rich type inference process for dynamically typed languages.
To tackle this, we combine \textit{Logical} constraints, deterministic information from a type system, with \textit{Natural} constraints, uncertain information about types, learnt by machine learning techniques, while focusing on the satisfaction of the typing rules dictated by the language.

A core aim of our method is to guide the predictions from the learning procedure to respect the logical constraints.
We achieve this by relaxing the logical type inference problem into a continuous space.
This allows us to constructively combine the natural and logical part in a single optimisation problem with guaranteed constraint satisfaction. 
Our framework is extensible: it can incorporate information from arbitrary models into its natural part and type constraints generated by traditional deterministic type inference systems.

We evaluate our framework by implementing \projectname, a tool that predicts types for TypeScript.
Our experiments show that \projectname achieves an accuracy of 76\% for Top-1 prediction on the standard test set~\cite{wei20} for probabilistic type inference on TypeScript, an improvement of 4\% on previous work.
An ablation study shows the improved performance arises from our new principled combination of both \textit{Logical} and \textit{Natural} constraints at test time.

\bibliographystyle{splncs04}
\balance
\bibliography{references}


\appendix
\section{Appendix: Continuous Relaxation in the Logit Space}\label{app:appendix-logit}




In \cref{ssec:logcon}, we present the continuous interpretation based on probabilities.
As already mentioned, in the actual implementation we use logit instead for numerical stability.
The \emph{logit} of a probability is the logarithm of the odds ratio.
It is defined as the inverse of the softmax function; that is, an element of a probability vector $p \in [0,1]$ corresponds to
\begin{equation*}
	\pi = \log \frac{p}{1 - p}.
\end{equation*}
It allows us to map probability values from $\left[ 0, 1 \right]$ to $\left[ -\infty, \infty \right]$.

Given the matrix~$\mathcal{L}$, which corresponds to the logit of the matrix~$P$ in \cref{ssec:logcon},
we have that $\text{log}(\qqpi{P}{E})  = \qqpi{\mathcal{L}}{E}$.
we interpret an expression~$E$ as a number ~$\qqpi{P}{E} \in \mathbb{R}$ as  follows:
\begin{align*}
	\qqpi{\mathcal{L}}{x_v \mathrel{is} l_\tau} & = \pi_{v,\tau}                                                \\ \label{eq:logits}
	\qqpi{\mathcal{L}}{\mathrel{not} E}         & = \log(1-\text{exp}(\qqpi{\mathcal{L}}{E})                \\
	\qqpi{\mathcal{L}}{E_1 \mathrel{and} E_2}   & = \qqpi{\mathcal{L}}{E_1} \mathrel{+} \qqpi{\mathcal{L}}{E_2} \\
	\qqpi{\mathcal{L}}{E_1 \mathrel{or} E_2}    & = \text{log}\big(
	\text{exp}(\qqpi{\mathcal{L}}{E_1}) + \text{exp}(\qqpi{\mathcal{L}}{E_2}) - \text{exp}(\qqpi{\mathcal{L}}{E_1} + \qqpi{\mathcal{L}}{E_2})\big).
\end{align*}
The sigmoid function is defined as
\begin{equation*}
	\text{sigmoid}(a) = \frac{\exp\{a\}}{1 + \exp\{a\}},
\end{equation*}
while the LogSumExp function is defined as
\begin{equation*}
	\text{LogSumExp}(\bm{x}) = \log\left( \sum_i \exp\{x_i\} \right).
\end{equation*}

\section{Appendix: Formal Proofs}\label{app:proofs}
\subsection{Proofs for Logical Constraints}

\begin{lemma} \label{lem:binary}
    For all $E$ and $\Gamma$, $\qqpi{B(\Gamma)}{E} \in \{0,1\}$.
\end{lemma}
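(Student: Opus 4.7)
The plan is to prove this by straightforward structural induction on the logical constraint $E$ as defined by the grammar in Definition~\ref{def:log-gram}. The key observation is that the set $\{0,1\}$ is closed under each of the continuous operators used in the relaxed semantics of equation~\eqref{eq:logical}, so the invariant propagates cleanly through the inductive cases.

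For the base case $E = x_v \mathrel{is} l_\tau$, I would unfold the relaxed semantics to obtain $\qqpi{B(\Gamma)}{x_v \mathrel{is} l_\tau} = b_{v,\tau}$, and then invoke the definition of $B(\Gamma)$, which explicitly sets each entry $b_{v,\tau}$ to either $0$ or $1$ depending on whether $(x_v, l_\tau) \in \Gamma$. So the atomic case follows directly.

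For the inductive step, assume the lemma holds for $E_1$ and $E_2$. For negation, $\qqpi{B(\Gamma)}{\mathrel{not} E} = 1 - \qqpi{B(\Gamma)}{E}$, and since $\qqpi{B(\Gamma)}{E} \in \{0,1\}$ by the inductive hypothesis, the result is in $\{0,1\}$. For conjunction, $\qqpi{B(\Gamma)}{E_1 \mathrel{and} E_2} = \qqpi{B(\Gamma)}{E_1} \cdot \qqpi{B(\Gamma)}{E_2}$, and $\{0,1\}$ is closed under multiplication. For disjunction, a short case check on the four combinations of $\qqpi{B(\Gamma)}{E_1}, \qqpi{B(\Gamma)}{E_2} \in \{0,1\}$ shows that $a + b - a \cdot b \in \{0,1\}$ in every case (yielding $0,1,1,1$ respectively).

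There is no real obstacle here; the lemma is essentially a sanity check that the product $t$-norm relaxation, restricted to binary inputs, reproduces classical boolean algebra. This lemma is presumably used as a helper for Theorem~\ref{thm:soft2hard}, where the $\Leftarrow$ direction will need to know that $\qqpi{B(\Gamma)}{E}$ takes only discrete values on binary matrices so that proving it equals $1$ is meaningful.
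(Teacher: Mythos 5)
Your proof is correct and follows exactly the route the paper takes: the paper's own proof is just the one-line remark ``by structural induction on the continuous semantics,'' and your write-up supplies precisely the details that remark elides (the binary matrix base case and closure of $\{0,1\}$ under $1-a$, $a\cdot b$, and $a+b-ab$). No differences to report.
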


\begin{proof} 
    By structural induction on the continuous semantics.
\end{proof}

\begin{lemma}
For all $E$, $E_1$, $E_2$, and $\Gamma$:
\begin{enumerate}
    \item $\qqpi{B(\Gamma)}{E} = 0 \Leftrightarrow{} not (\qqpi{B(\Gamma)}{E} = 1)$
    \item $\qqpi{B(\Gamma)}{E_1} = 1 \mathrel{and} \qqpi{B(\Gamma)}{E_2} = 1 \Leftrightarrow{}  \qqpi{B(\Gamma)}{E_1} \cdot \qqpi{B(\Gamma)}{E_2} = 1$
    \item $\qqpi{B(\Gamma)}{E_1} = 1 \mathrel{or} \qqpi{B(\Gamma)}{E_2} = 1 \Leftrightarrow{}  \qqpi{B(\Gamma)}{E_1} + \qqpi{B(\Gamma)}{E_2} - \qqpi{B(\Gamma)}{E_1} \cdot \qqpi{B(\Gamma)}{E_2} = 1$
\end{enumerate}
\end{lemma}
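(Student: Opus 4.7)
The plan is to lean on the preceding \cref{lem:binary}, which guarantees that for any formula $E'$ and environment $\Gamma$ the value $\qqpi{B(\Gamma)}{E'}$ lies in $\{0, 1\}$. Once this restriction to binary values is in place, each of the three claims reduces to an entirely elementary arithmetic check on the finite set $\{0,1\}$, so there is no need for structural induction on the formulas $E$, $E_1$, $E_2$ themselves.

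For item (1), I would simply observe that $\qqpi{B(\Gamma)}{E} \in \{0, 1\}$ by \cref{lem:binary}, so $\qqpi{B(\Gamma)}{E} = 0$ and $\qqpi{B(\Gamma)}{E} = 1$ are mutually exclusive and jointly exhaustive; hence $\qqpi{B(\Gamma)}{E} = 0$ if and only if $\qqpi{B(\Gamma)}{E} \neq 1$. For item (2), write $a = \qqpi{B(\Gamma)}{E_1}$ and $b = \qqpi{B(\Gamma)}{E_2}$, both in $\{0,1\}$ by \cref{lem:binary}. A two-by-two case split on $(a,b)$ shows that $a \cdot b = 1$ exactly in the case $(1,1)$, which is precisely when both conjuncts hold. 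For item (3), under the same naming, the same case analysis gives the truth table
\begin{equation*}
\begin{array}{c|c|c}
(a,b) & a+b-ab & a=1 \text{ or } b=1 \\ \hline
(0,0) & 0 & \text{false} \\
(0,1) & 1 & \text{true} \\
(1,0) & 1 & \text{true} \\
(1,1) & 1 & \text{true}
\end{array}
\end{equation*}
which shows the equivalence with the Boolean disjunction.

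The only step that requires any thought is invoking \cref{lem:binary} at the right moment; everything else is pure case analysis. Since \cref{lem:binary} is itself proved by a straightforward structural induction (each semantic clause preserves membership in $\{0,1\}$: $1-x$, $x \cdot y$, and $x + y - xy$ all map $\{0,1\}^k$ into $\{0,1\}$), no real obstacle arises. I expect the final write-up to be only a few lines, and I anticipate this lemma being used as the binary backbone for the induction establishing \cref{thm:soft2hard}, where exactly these three equivalences let the inductive step translate between the logical connectives of \eqref{eq:logsat} and their relaxed counterparts in \eqref{eq:logical}.
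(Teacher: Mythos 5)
Your proposal is correct and matches the paper's own argument, which likewise dispatches all three items "by cases analyses based on Lemma~\ref{lem:binary}"; your explicit truth tables simply spell out the case analysis the paper leaves implicit.
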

\begin{proof}
  These follow by cases analyses based on Lemma~\ref{lem:binary}.
\end{proof}

\begin{lemma}\label{lem:models}
    For all $E$ and $\Gamma$, either $\Gamma \models E$ or $\Gamma \models{} \mathrel{not} E$.
\end{lemma}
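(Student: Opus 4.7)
The plan is to proceed by structural induction on the formula $E$, using the inductive definition of $\models$ given in~\eqref{eq:logsat} and the classical excluded middle in the meta-logic. Since the logic defined by \cref{def:log-gram} is bivalent---each atomic proposition $x_v \mathrel{is} l_\tau$ is either true or false given $\Gamma$---the whole statement reduces to showing that this bivalence propagates up through $\mathrel{not}$, $\mathrel{and}$, and $\mathrel{or}$.

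First I would handle the base case $E = x_v \mathrel{is} l_\tau$. Here $\Gamma$ is a finite function whose domain contains $x_v$, so $\Gamma(x_v)$ is a fixed literal type; either it equals $l_\tau$, in which case $\Gamma \models E$ by~\eqref{eq:logsat}, or it does not, in which case $\Gamma \not\models E$ and hence $\Gamma \models \mathrel{not} E$. Next, for $E = \mathrel{not} E'$, the induction hypothesis gives $\Gamma \models E'$ or $\Gamma \models \mathrel{not} E'$. In the first case, the clause for $\mathrel{not}$ in~\eqref{eq:logsat} gives $\Gamma \models \mathrel{not}\mathrel{not} E'$, i.e.\ $\Gamma \models \mathrel{not} E$; in the second case, $\Gamma \models \mathrel{not} E' = E$. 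The conjunction and disjunction cases split on the four combinations coming from two applications of the induction hypothesis (one to $E_1$ and one to $E_2$) and are routine: if both subformulas are satisfied, $E_1 \mathrel{and} E_2$ is satisfied, otherwise the failing conjunct witnesses $\Gamma \models \mathrel{not}(E_1 \mathrel{and} E_2)$; symmetrically for $\mathrel{or}$.

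I do not anticipate a significant obstacle. The only mildly subtle point is the $\mathrel{not}$ case, where one must invoke classical reasoning in the meta-logic to conclude from \textit{not} ($\Gamma \models E'$) that $\Gamma \models \mathrel{not} E'$; this is immediate from the semantic clause $\Gamma \models \mathrel{not} E' \Leftrightarrow{}$ \emph{not} $\Gamma \models E'$ in~\eqref{eq:logsat}. Everything else is a finite case analysis.
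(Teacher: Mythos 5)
Your proof is correct and takes the same route as the paper, which simply states ``by structural induction on the satisfaction relation'' without giving details; your case analysis fills in exactly what that one-liner leaves implicit. (As a minor aside, the clause $\Gamma \models \mathrel{not} E \Leftrightarrow$ \emph{not} $\Gamma \models E$ means the whole lemma is a direct instance of excluded middle in the meta-logic applied to the predicate $\Gamma \models E$, so the induction, while harmless, is not strictly needed.)
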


\begin{proof}
    By structural induction on the satisfaction relation.
\end{proof}



\restate{Theorem~\ref{thm:soft2hard}} For all $E$ and $\Gamma$: $\qqpi{B(\Gamma)}{E} = 1 \Leftrightarrow \Gamma \models E$.
\begin{proof}
  We prove the property by \emph{structural induction}; that is,
  we prove that $\phi(N)$ holds for all $N$, where $\phi(N)$ is as follows.
        \begin{equation*}
            \phi(N) \triangleq
                \forall E, \forall \Gamma :
                size(E)=N \Rightarrow (\Gamma \models E \Leftrightarrow \qqpi{B(\Gamma)}{E} = 1).
        \end{equation*}
        
  We proceed by course-of-values induction on $N$.
  Consider any $E, \Gamma$ and $N = size(E)$. We proceed by a case analysis at $E$.
  \begin{description}
      \item[$Base\;Case$]
                  For $N=1$, the base case is $E = (x_v \mathrel{is} l_\tau)$.
                  For any $\Gamma$ we are to show
                  \begin{equation*}
                  \Gamma \models x_v \mathrel{is} l_\tau \Leftrightarrow{}\qqpi{B(\Gamma)}{x_v \mathrel{is} l_\tau} = 1.
                  \end{equation*}
                  
                  By definition, $\qqpi{B(\Gamma)}{x_v \mathrel{is} l_\tau} = p_{v,\tau}$
                  where $p_{v,\tau}$ is the probability that variable $x_v$ has type $l_\tau$ according to the matrix $B(\Gamma)$. 
                  By definition of $B(\Gamma)$ and because $B$ results to  a binary matrix,  
                  $\qqpi{B(\Gamma)}{x_v \mathrel{is} l_\tau} = 1$
                  means that the element $p_{v,\tau}$ is equal to $1$, that is 
                  $\Gamma \models x_v \mathrel{is} l_\tau$.
                  Also, $\Gamma \models x_v \mathrel{is} l_\tau$, implies that $\Gamma(x_v)=l_\tau$. By definition, that
                  means  $\qqpi{B(\Gamma)}{x_v \mathrel{is} l_\tau} = 1$.
      
     \item[$Case \; E ={} \mathrel{not} E'$.]
                  We are to show
                  $\Gamma \models{} \mathrel{not} E' \Leftrightarrow \qqpi{B(\Gamma)}{\mathrel{not} E'} = 1$.
                  We have that,
                  \begin{align*}
                        \qqpi{B(\Gamma)}{\mathrel{not} E'} = 1 & \Leftrightarrow{} &  \\
                        1- \qqpi{B(\Gamma)}{E'} = 1 & \Leftrightarrow{} & \text{(Definition)}  \\
                        \qqpi{B(\Gamma)}{E'} = 0 & \Leftrightarrow{} &  \\
                        \mathrel{not} (\qqpi{B(\Gamma)}{E'} = 1) & \Leftrightarrow{} & \text{(Lemma~\ref{lem:binary})} \\
                        \mathrel{not}  \Gamma \models{}  E'
                        & \Leftrightarrow{} & \text{(Induction Hypothesis)} \\
                         \Gamma \models{} \mathrel{not} E' & & \text{(Definition)}.
                  \end{align*}
          
          \item[$Case \; E = (E_1 \mathrel{and} E_2)$.] 
                  For $size(E_1)<N$ and $size(E_2) < N$, we are to show that
                  $\Gamma \models (E_1 \mathrel{and} E_2) \Leftrightarrow \qqpi{B(\Gamma)}{(E_1 \mathrel{and} E_2)} = 1$. Our induction hypothesis is that $\phi(M)$ holds for all $M<N$. We have that
                  \begin{align*}
                      \Gamma \models (E_1 \mathrel{and} E_2)                            & \Leftrightarrow &  \\
                      \Gamma \models E_1 \mathrel{and} \Gamma \models E_2               & \Leftrightarrow & \text{(Definition)}  \\
                      \qqpi{B(\Gamma)}{E_1} = 1 \mathrel{and} \qqpi{B(\Gamma)}{E_2} = 1 & \Leftrightarrow & \text{(Induction Hypothesis)}           \\
                      \qqpi{B(\Gamma)}{E_1} \cdot \qqpi{B(\Gamma)}{E_2} = 1             & \Leftrightarrow &  \text{(Lemma~\ref{lem:binary})}           \\                 
                      \qqpi{B(\Gamma)}{(E_1 \mathrel{and} E_2)} = 1                  &   &  \text{(Definition)}                                         \\
                  \end{align*}
                  which completes the proof for this case.

            \item[$Case \; E = (E_1 \mathrel{or} E_2)$.]
                  For $size(E_1) < N$ and $size(E_2) < N$, we are to show
                  $\Gamma \models (E_1 \mathrel{and} E_2) \Leftrightarrow \qqpi{B(\Gamma)}{(E_1 \mathrel{and} E_2)} = 1$. Our induction hypothesis is that $\phi(M)$ holds for all $M<N$. We have that
                  \begin{align*}
                      \Gamma \models (E_1 \mathrel{or} E_2)                                                                     & \Leftrightarrow & \\
                      \Gamma \models E_1 \mathrel{or} \Gamma \models E_2                                                        & \Leftrightarrow &\text{(Definition)} \\
                      \qqpi{B(\Gamma)}{E_1} = 1 \mathrel{or} \qqpi{B(\Gamma)}{E_2} = 1                                          & \Leftrightarrow & \text{(Induction Hypothesis)}                                  \\
                      (\qqpi{B(\Gamma)}{E_1} - 1) \cdot (1- \qqpi{B(\Gamma)}{E_2}) = 0                                          & \Leftrightarrow                                  \\
                      \qqpi{B(\Gamma)}{E_1} - \qqpi{B(\Gamma)}{E_1} \cdot \qqpi{B(\Gamma)}{E_2} + \qqpi{B(\Gamma)}{E_2} - 1 = 0 & \Leftrightarrow                         &\text{(Case Analysis \& Lemma~\ref{lem:binary})}         \\
                      \qqpi{B(\Gamma)}{E_1 \mathrel{or} E_2} = 1                                                               &. 
                  \end{align*}
                  which completes the proof for this case.
    \end{description}
\end{proof}

\section{Appendix: Neural Model }\label{app:neural} 
In this appendix we present the implementation details of the deep neural  used in \cref{ssec:natprodts}.
\begin{lstlisting}[numbers=none,caption={Our Character Level \textit{LSTM} model.},captionpos=b]
  LSTMClassifier(
  (embedding): Embedding(90, 128)
  (lstm): LSTM(128, 64)
  (hidden2out): Linear(in_features=64, 
                out_features=100, bias=True)
  (softmax): LogSoftmax()
  (optimization fun): ADAM)
\end{lstlisting}

\end{document}